\definecolor{mygrey}{gray}{0.35}
\definecolor{myblue}{rgb}{0.2,0.2,0.8}
\definecolor{myzard}{cmyk}{0,0,0.05,0}
\definecolor{mywhite}{rgb}{1,1,1}
\definecolor{myred}{rgb}{0.9,0.1,0.}
\newtheorem{theorem}{Theorem}
\newtheorem{definition}{Definition}
\newtheorem{corollary}{Corollary}
\newtheorem{proposition}{Proposition}
\newenvironment{proof}{\noindent\textbf{Proof:}}{\hfill$\blacksquare$\newline}
\newcommand{\diag}{\operatorname{\bf{diag}}} 
\newcommand{\ket}[1]{\vert #1 \rangle} 
\newcommand{\bra}[1]{\langle #1 \vert} 
\begin{document}

\title{A note on coherence power of N-dimensional unitary operators}

\author{M. Garc\'ia-D\'iaz, D. Egloff and M.B. Plenio}
\affiliation{Institut f\"{u}r Theoretische Physik, Albert-Einstein-Allee 11,
Universit\"{a}t Ulm, 89069 Ulm, Germany}

\date{\today}

\begin{abstract}
The coherence power of a quantum channel, that is, its ability to increase the coherence
of input states, is a fundamental concept within the framework of the resource theory of
coherence. In this note we discuss various possible definitions of coherence power. Then
we prove that the coherence power of a unitary operator acting on a qubit, computed with
respect to the $l_1$-coherence measure, can be calculated by maximizing its coherence gain
over pure incoherent states. We proceed to show that this result fails for dimensions $N>2$,
that is, the maximal coherence gain is found when acting on a state with non-vanishing
coherence.
\end{abstract}
\maketitle

\section{Introduction}
The development of quantum information science has led to a reassessment of quantum physical
properties such as non-locality or entanglement, elevating them to {\it resources} that may
be exploited to achieve tasks that are impossible when these properties are not available.
The quantitative theory of entanglement \cite{vedral1998entanglement,plenio2007introduction}
was perhaps the first example of a theory that was formulated by taking seriously the idea
that quantum properties are physical resources. The starting point was to take the view that
constraints, here the restriction to local operations and classical communication, prevent
certain non-local physical operations from being realizable unless resources, here entangled
states, are available which may be consumed to allow us to overcome the imposed constraints
\cite{BrandaoP08,BrandaoP10}. This viewpoint has proven fruitful as an impetus for theory to
establish a unified and rigorously defined framework for a quantitative theory of physical
resources by addressing the three principal issues: (i) the characterization, (ii) the
quantification and (iii) the manipulation of quantum states under the imposed constraints.
This framework is being explored for entanglement \cite{vedral1998entanglement,plenio2007introduction},
specific formulations of quantum thermodynamics \cite{BrandaoHO+11,GourMN+13} and of reference
frames \cite{GourS08,GourMS2009} and has led to the recognition of deep interrelations between
the theories of entanglement and the second law \cite{BrandaoP08,BrandaoP10}.

Recently, \cite{BaumgratzCP14} formulated a resource theory for quantum coherence, which is a
fundamental trait of quantum mechanics. In this work the authors defined a number of coherence
measures and outlined, following the example of the theory of entanglement, various extensions
that would have to be completed to explore all the aspects of the resource theory of coherence.
This includes the study of the interconversion of coherent states by means of incoherent operations
both, in the single copy \cite{DuBG2015,DuBQ2015,QiBD2015} and the asymptotic regime \cite{WinterY2015}
as well as the characterisation of incoherent operations \cite{StreltsovRB+2015,ChitambarH2015}.
Although not addressed from the perspective of resource theory, \cite{Aberg2006,LeviM2013} have
also dealt with the quantification of quantum coherence and the formal characterization of
coherence-decreasing processes. The relationship between coherence and entanglement has been
studied from various angles \cite{KilloranSP2015,StreltsovSD+2015,YaoXG+2015}.

Aside of these developments it was pointed out in \cite{BaumgratzCP14} that following the example
of entanglement theory \cite{EisertJP+2000,ZanardiZF00} it would be natural to develop a quantitative
theory of the coherence of operations which may have applications in the study of coherence in
dynamical processes including biological systems where the presence and role of coherence remains
a matter of current debate \cite{HuelgaP2013,CarusoCD+2010}. Indeed, first steps in
this direction were taken in \cite{ManiK2015,BuZW2015} which mostly considered the coherence power of
operations when acting on incoherent states. In our work we will demonstrate that while being
consistent, this is too restrictive as it can be shown that the achievable coherence gain can
be higher when accepting states as input which already possess some coherence~\cite{Note_added}. This mirrors
similar observations in the realm of entanglement theory \cite{BennettHL+2003,LeiferHL2003}.

After this introduction, in section \ref{Section2} of our manuscript we repeat some
basic definitions concerning coherence measures which will be followed by a discussion
of possible definitions of coherence properties of operations. This will be followed in
section \ref{Section3} by a discussion of the coherence power of unitaries on qubits which
we prove to be achieved on incoherent states. Section \ref{Section4} then proceeds to
demonstrate by means of two simple examples that for higher dimensional systems the largest
gain in coherence is typically achieved on states with coherence. We conclude with a
summary and outlook.

\section{Basic Definitions}\label{Section2}
In this section we provide the basic definitions of the quantities that we will be
exploring in this work.

{\it Measures of coherence of states --} One result of the resource theory of coherence
are well-defined quantifiers of coherence, coherence measures, which are quantities that
cannot increase under the action of incoherent operations. Several such coherence measures
could be identified and include the relative entropy of coherence as well as the $l_1$-coherence
\cite{BaumgratzCP14}. While most definitions concerning the coherence power of operations
can be formulated for any choice of coherence measure, for explicit calculations it is of
advantage to consider the $l_1$-coherence measure
\begin{equation}
    C_{l_1}(\rho)=\sum_{i\neq j}|\rho_{ij}|.
\end{equation}

{\it Coherence properties of operations --}
Many physical questions relate to quantum operations and time evolution rather than
directly to quantum states. Hence it is of considerable interest to examine the
coherence properties of quantum operations or of their generators.
Let us begin with the
\begin{definition}\label{def1}
    The coherence power $P(\Phi)$ of a completely positive operation $\Phi$ is defined
    relative to the coherence measure $C(.)$ via
    \begin{equation}
        P(\Phi) = \max_{\rho} [C(\Phi(\rho)) - C(\rho)].
    \end{equation}
    For a unitary operation the coherence power is therefore
    \begin{equation}
        P(U) = \max_{\rho} [C(U\rho U^{\dagger}) - C(\rho)].
    \end{equation}
\end{definition}
We have deliberately left unrestricted the range over which the $\rho$ in the maximization
are taken. In \cite{ManiK2015} this range was restricted to the set of incoherent
states, i.e. the states for which $C(\rho)=0$. While this may appear to be a natural choice
it is not immediately clear that $C(\Phi(\rho)) - C(\rho)$ may actually be larger for some
$\rho$ with $C(\rho)>0$. Indeed, motivated by similar observations in the theory of entanglement
we consider this question and answer it in the affirmative~\cite{Note_added} in section \ref{Section4}.

Of interest in the context of dynamical systems are the time dependent generalizations
of the above concepts. Let us consider for example a time evolution $\Phi_t(\rho)$ with
generator ${\cal G}$, that is $\Phi_t = e^{{\cal G}t}$ or for the special case of a
unitary operator $U_t = e^{-iHt}$. Then one may either apply direction definition
\ref{def1} at a time $t$ or one may consider the coherence power of the generator by
\begin{definition} \label{defpower}
    For a time evolution $\Phi_t = e^{{\cal G}t}$ we determine the coherence power of the
    generator as
    \begin{equation}
        P({\cal G}) = \lim_{\Delta t\rightarrow 0} \frac{1}{\Delta t}\max_{\rho} [C(e^{{\cal G}\Delta t}\rho) - C(\rho)]
    \end{equation}
    and in case of unitary evolutions $U(t)=e^{-iHt}$ we write
    \begin{equation}
        P(H) = \lim_{\Delta t\rightarrow 0} \frac{1}{\Delta t}\max_{\rho} [C(e^{-iH\Delta t}\rho e^{iH\Delta t}) - C(\rho)].
    \end{equation}
\end{definition}
Note that one may also pursue questions concerning the coherence cost of an operation,
that is, the amount of coherence in the form of maximally coherent states that is required
to achieve an operation purely from incoherent operations. Questions regarding coherence cost
and distillable coherence have been addressed in \cite{WinterY2015}. We will not pursue such
quantities further here.

Of interest would be also to consider the N-dimensional unitary operations that have maximal coherence power. An example of this kind of unitaries would be the discrete Fourier transform:
\begin{corollary}
The coherence power of the discrete N-dimensional Fourier transform, calculated with respect to $l_1$-coherence, is maximal and is given by:
\begin{equation}
P_{l_1}(\mathcal{F})=N-1
\end{equation}
\end{corollary}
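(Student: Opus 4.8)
The plan is to prove the claim by establishing matching upper and lower bounds on $P_{l_1}(\mathcal{F})$, the key observation being that the value $N-1$ is in fact the largest coherence power attainable by \emph{any} $N$-dimensional unitary, and that $\mathcal{F}$ saturates it already on an incoherent input. First I would derive a universal upper bound. Since $C_{l_1}(\rho)\ge 0$ for every state, we have, for any unitary $U$,
\begin{equation}
P_{l_1}(U)=\max_{\rho}[C_{l_1}(U\rho U^{\dagger})-C_{l_1}(\rho)]\le \max_{\rho} C_{l_1}(U\rho U^{\dagger})=\max_{\sigma} C_{l_1}(\sigma),
\end{equation}
where the final equality uses that $U$ maps the set of states bijectively onto itself. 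It therefore remains only to compute $\max_{\sigma} C_{l_1}(\sigma)$.

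Second, I would show this maximum equals $N-1$. Because each matrix element $\sigma_{ij}$ is linear in $\sigma$ and $|\cdot|$ is convex, $C_{l_1}$ is a convex function of $\sigma$, so its maximum over the convex set of density matrices is attained at an extreme point, i.e.\ a pure state $\sigma=\ket{\psi}\bra{\psi}$ with $\ket{\psi}=\sum_j c_j\ket{j}$. For such a state
\begin{equation}
C_{l_1}(\sigma)=\sum_{i\neq j}|c_i||c_j|=\Big(\sum_i|c_i|\Big)^2-1,
\end{equation}
and Cauchy--Schwarz gives $\big(\sum_i|c_i|\big)^2\le N\sum_i|c_i|^2=N$, with equality precisely when $|c_i|=1/\sqrt{N}$ for all $i$. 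Hence $\max_{\sigma} C_{l_1}(\sigma)=N-1$, which combined with the previous step yields $P_{l_1}(U)\le N-1$ for \emph{every} unitary $U$; this is the sense in which the coherence power of $\mathcal{F}$ is ``maximal''.

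Finally, I would exhibit an input saturating the bound for $\mathcal{F}$. Writing $\mathcal{F}_{jk}=\frac{1}{\sqrt{N}}\omega^{jk}$ with $\omega=e^{2\pi i/N}$ and feeding in the incoherent state $\rho=\ket{0}\bra{0}$, one finds $\mathcal{F}\ket{0}=\frac{1}{\sqrt{N}}\sum_j\ket{j}$, a maximally coherent state with $C_{l_1}=N-1$, while $C_{l_1}(\ket{0}\bra{0})=0$. The coherence gain is thus exactly $N-1$, so $P_{l_1}(\mathcal{F})\ge N-1$, and together with the upper bound this gives $P_{l_1}(\mathcal{F})=N-1$.

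I expect no serious analytic obstacle: the argument reduces to the elementary fact that $\mathcal{F}$ sends a basis vector to a uniform superposition, which already reaches the global coherence ceiling. The only point genuinely requiring care is the upper-bound step, specifically the justification that the maximum of $C_{l_1}$ over \emph{all} states, not merely pure ones, equals $N-1$, for which the convexity and extreme-point reduction is essential. It is also worth remarking that, in contrast to the generic $N>2$ behaviour established in Section \ref{Section4}, the maximum for $\mathcal{F}$ happens to be attained on an incoherent state simply because its output coherence already attains the absolute maximum $N-1$, leaving no room for an improvement via coherent inputs.
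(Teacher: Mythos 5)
Your proof is correct and follows essentially the same route as the paper: a lower bound from feeding $\mathcal{F}$ a pure incoherent basis state (which it maps to the maximally coherent state, gain $N-1$), combined with the universal upper bound $P_{l_1}(U)\le N-1$. The only difference is that you spell out the justification of the upper bound (convexity plus Cauchy--Schwarz), which the paper simply asserts.
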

\textbf{Proof:}
\begin{eqnarray*}
    P_{l_1}(\mathcal{F}) &=& \\
    && \hspace*{-2.cm} =\max_\rho[\frac{1}{N}\sum_{a\ne b}|\sum_{j,j'}e^{\frac{2\pi i}{N}(ja-j'b)}\rho_{jj'}|-\sum_{a\ne b}|\rho_{ab}|]\\
    && \hspace*{-2.cm}  \geq\max_{\rho=\ket{k}\bra{k}}[\frac{1}{N}\sum_{a\ne b}|\sum_{j,j'}e^{\frac{2\pi i}{N}(ja-j'b)}\rho_{jj'}|-\sum_{a\ne b}|\rho_{ab}|]\\
    && \hspace*{-2.cm}  =N-1
\end{eqnarray*}
Since $P_{l_1}(U)\le N-1$, we conclude that a discrete N-dimensional Fourier transform is an example of unitary having maximal coherence power.

\section{Coherence power of a 2-dimensional unitary operator}\label{Section3}
As we have already mentioned, it is a non-trivial question whether it suffices in Definition
\ref{def1} to restrict $\rho$ to incoherent states or whether the full range of possible states,
including states with coherence, need to be considered. First we formulate and prove

\begin{theorem} The coherence power of a 2-dimensional unitary operation $U$ acting on qubits
and calculated with respect to the $l_1$-coherence is maximal for pure incoherent states $$P_{l_1}(U)=\max_{i=1,2}[C_{l_1}(U\ket{i}\bra{i} U^\dagger)].$$
\end{theorem}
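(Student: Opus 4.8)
The plan is to pass to the Bloch-sphere picture, where the $l_1$-coherence of a qubit becomes a transparent geometric quantity. Writing $\rho=\tfrac12(\id+\vec{r}\cdot\vec{\sigma})$ with Bloch vector $\vec{r}=(r_x,r_y,r_z)$, one has $\rho_{01}=\tfrac12(r_x-ir_y)$, so that $C_{l_1}(\rho)=\sqrt{r_x^2+r_y^2}$ is exactly the length of the projection of $\vec{r}$ onto the equatorial plane. A unitary $U$ acts on Bloch vectors by a rotation $R\in SO(3)$, i.e.\ $U\rho U^\dagger\leftrightarrow R\vec{r}$, whence $C_{l_1}(U\rho U^\dagger)=\sqrt{(R\vec{r})_x^2+(R\vec{r})_y^2}$. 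The problem thus reduces to maximizing the coherence gain $g(\vec{r}):=C_{l_1}(U\rho U^\dagger)-C_{l_1}(\rho)$ over the Bloch ball $\{|\vec{r}|\le 1\}$, and I must show that the maximizer is a pole $\vec{r}=\pm\vec{e}_z$, which represent the incoherent pure states $\ket{1},\ket{2}$.

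First I would confine the search to pure states. The key observation is that $g$ is positively homogeneous of degree one in $\vec{r}$ (both equatorial radii scale linearly under $\vec{r}\mapsto t\vec{r}$) and $g(0)=0$. Hence along each ray the gain is linear in the radius, so wherever $g$ is positive it is maximized on the unit sphere; since $g\ge 0$ already at the poles, $\max_{|\vec{r}|\le 1}g=\max_{|\vec{r}|=1}g$. This fixes the optimum on pure states without invoking any convexity argument.

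Next I would recast the residual problem in terms of cross products. For a unit vector $\vec{v}$ the equatorial radius is $\sqrt{1-(\vec{e}_z\cdot\vec{v})^2}=\|\vec{e}_z\times\vec{v}\|$, and the equatorial radius of $R\vec{r}$ equals $\|\vec{e}_z\times R\vec{r}\|=\|\vec{n}\times\vec{r}\|$ with $\vec{n}:=R^{\top}\vec{e}_z$. The claim then reduces to the single inequality
\begin{equation*}
\|\vec{n}\times\vec{r}\|-\|\vec{e}_z\times\vec{r}\|\le\|\vec{n}\times\vec{e}_z\|
\end{equation*}
for all unit $\vec{r}$, whose right-hand side is precisely the coherence produced from a pole. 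I would prove it by splitting $\vec{n}=(\vec{n}\cdot\vec{e}_z)\vec{e}_z+\vec{n}_\perp$, so that $\vec{n}\times\vec{r}=(\vec{n}\cdot\vec{e}_z)(\vec{e}_z\times\vec{r})+\vec{n}_\perp\times\vec{r}$; the triangle inequality combined with $|\vec{n}\cdot\vec{e}_z|\le 1$ and $\|\vec{n}_\perp\times\vec{r}\|\le\|\vec{n}_\perp\|=\|\vec{n}\times\vec{e}_z\|$ closes the bound, with equality attained at $\vec{r}=\pm\vec{e}_z$.

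The hard part is exactly this inequality. It reads as a ``triangle inequality for sines'', $\sin\alpha\le\sin\beta+\sin\gamma$ for the pairwise angles of $\vec{n},\vec{r},\vec{e}_z$, but one cannot simply invoke the spherical triangle inequality $\alpha\le\beta+\gamma$, since $\sin$ is not monotone on $[0,\pi]$ and a naive deduction would fail on angle triples that are not realizable by three unit vectors. The orthogonal-decomposition argument above is what makes the estimate rigorous and pins the optimum to the incoherent poles. To finish, I would note that for a qubit both poles yield the same value $\|\vec{n}\times\vec{e}_z\|=\sqrt{1-R_{zz}^2}$, so the two incoherent pure states are equivalent and $P_{l_1}(U)=\max_{i=1,2}C_{l_1}(U\ket{i}\bra{i}U^\dagger)$, as claimed.
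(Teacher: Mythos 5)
Your proof is correct, and it takes a genuinely different route from the paper's. The paper works at the level of matrix entries: it first uses invariance of the coherence power under pre- and post-composition with diagonal phase unitaries to reduce to a real $U$, then writes the gain explicitly in terms of $\rho_{gg}$ and $\rho_{ge}$, optimizes the phase of $\rho_{ge}$, and shows by a case analysis (using orthonormality of the columns of $U$) that the coefficient of $|\rho_{ge}|$ is non-positive, so the optimum sits at $\rho_{ge}=0$; a final convexity step reduces mixed incoherent states to pure ones. You instead pass to the Bloch picture, where $C_{l_1}$ is the equatorial radius $\|\bm{e}_z\times\bm{r}\|$, use positive homogeneity of the gain to restrict to the unit sphere (replacing the paper's convexity argument), and reduce the whole theorem to the single inequality $\|\bm{n}\times\bm{r}\|-\|\bm{e}_z\times\bm{r}\|\le\|\bm{n}\times\bm{e}_z\|$, proved cleanly by the orthogonal decomposition of $\bm{n}$ — and you are right to note that the naive ``$\sin\alpha\le\sin\beta+\sin\gamma$ from $\alpha\le\beta+\gamma$'' shortcut would be unsound. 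Your route is more geometric and arguably more transparent: it avoids the phase-gauge reduction and the sign case distinction entirely, it exhibits the maximizers as the poles in one stroke, and it delivers the closed form $P_{l_1}(U)=\sqrt{1-R_{zz}^2}=2|U_{11}||U_{21}|$, which immediately recovers Corollary 2. What it does not obviously offer is a template for higher dimensions, where the paper's entrywise bookkeeping (and its failure for $N>2$) is the point of the subsequent section; the Bloch-ball geometry is special to qubits.
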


\begin{proof}
First we note that the coherence power of $R_z(\alpha)UR_z(\beta)$ is the same as that for $U$.
\begin{eqnarray*}
    P(R_z(\alpha)UR_z(\beta)) &=& \\
    && \hspace*{-2.cm} =\max_\rho[C(R_z(\alpha)UR_z(\beta)\rho R_z^\dagger(\beta) U^\dagger R_z^\dagger(\alpha))-C(\rho)]\\
    && \hspace*{-2.cm}  =\max_\rho[C(UR_z(\beta)\rho R_z^\dagger(\beta) U^\dagger)-C(\rho)]\\
    && \hspace*{-2.cm}  =\max_\rho[C(U\rho U^\dagger)-C(R_z^\dagger(\beta)\rho R_z(\beta)]\\
    && \hspace*{-2.cm}  =\max_\rho[C(U\rho U^\dagger)-C(\rho)]\\
    && \hspace*{-2.cm} = P(U)
\end{eqnarray*}
Now consider
\begin{equation}
     M=\begin{pmatrix} e^{i(\psi+\alpha)}&0 \\ 0&e^{-i(\psi-\alpha)} \end{pmatrix}
     \begin{pmatrix} u_{gg} & u_{ge} \\ u_{eg} & u_{ee}  \end{pmatrix}
     \begin{pmatrix} e^{i(\phi+\beta)}&0 \\ 0&e^{-i(\phi-\beta)} \end{pmatrix}
     \nonumber
\end{equation}
where $\alpha$ and $\beta$ are global phases without physical effect. We choose $\alpha$
and $\psi$ such that $u_{gg}e^{i(\psi + \alpha)}\in\mathbb{R}^+ $ and
$u_{eg}e^{i(-\psi + \alpha)}\in\mathbb{R}$. Hence we find
\begin{equation}
     M = \begin{pmatrix} u_{gg}&u_{ge} \\ u_{eg}&u_{ee}  \end{pmatrix}\begin{pmatrix} e^{i\phi}&0 \\ 0&e^{-i\phi}  \end{pmatrix}\begin{pmatrix} e^{i\beta}&0 \\ 0&e^{i\beta}  \end{pmatrix}
\end{equation}
with $u_{gg}\in\mathbb{R}^+$ and $u_{eg}\in\mathbb{R}$. Now choose $\phi=-\beta$ and
make use of the orthonormality of the columns in a unitary
\begin{equation}
    u_{gg}(u_{ge}e^{-2i\phi}) + u_{eg}(u_{ee}e^{-2i\phi}) = 0
\end{equation}
to conclude from $u_{gg},u_{eg}\in\mathbb{R}$ that the phase of $u_{ge}e^{-2i\phi}$
and $u_{ee}e^{-2i\phi}$ is equal and can be eliminated by appropriate choice of $\phi$.
Hence we can assume
\begin{equation}
     M = \begin{pmatrix} u_{gg} & u_{ge} \\ u_{eg} & u_{ee}  \end{pmatrix}
\end{equation}
with $u_{gg},u_{eg},u_{ge} \mbox{and} u_{ee}\in \mathbb{R}$. Hence we can start by considering
real $U$ and using $\rho_{gg}=1-\rho_{ee}$ and $\rho_{eg}=\rho_{ge}e^{i\gamma}$ we find
\begin{eqnarray*}
    P(U) &=& 2\max_\rho[|u_{ee}u_{ge}+\rho_{gg}(u_{eg}u_{gg}-u_{ee}u_{ge})\\
    && \hspace{1.5cm} +\rho_{ge}(u_{ee}u_{gg}+e^{i\gamma}u_{eg}u_{ge})| -|\rho_{ge}|]
\end{eqnarray*}
As the first two terms are real and the third term can be chosen to have any phase
by virtue of the freedom of phase of $\rho_{ge}$ we notice that the absolute value
takes on its maximum value when $\rho_{ge}(u_{ee}u_{gg}+e^{i\gamma}u_{eg}u_{ge})$
is real and has the same sign as the sum of the first two terms.\\
Now let us choose $\rho_{ge}(u_{ee}u_{gg} + e^{i\gamma}u_{eg}u_{ge})\in\mathbb{R}$ and
with the same sign as $u_{ee}u_{ge}+\rho_{gg}(u_{eg}u_{gg}-u_{ee}u_{ge})$ (the case
for opposite sign is treated analogously). Then there are two cases:\\
1) $u_{ee}u_{ge}+\rho_{gg}(u_{eg}u_{gg}-u_{ee}u_{ge})>0$ which leads to
\begin{eqnarray*}
    P(U) &=& 2\max_\rho[(u_{ee}u_{ge}+\rho_{gg}(u_{eg}u_{gg}-u_{ee}u_{ge})\\
    && \hspace{1.5cm} +|\rho_{ge}|(|u_{ee}u_{gg}+e^{i\gamma}u_{eg}u_{ge}|-1)]
\end{eqnarray*}
As $U\in\mathbb{R}$ we have
\begin{displaymath}
    |u_{ee}u_{gg} + e^{i\gamma} u_{eg}u_{ge}| = \left|{\begin{pmatrix} u_{gg} \\ u_{ge}e^{i\gamma} \end{pmatrix}
    \begin{pmatrix} u_{ee} \\  u_{eg} \end{pmatrix}^{\dagger}}\right|
\end{displaymath}
As the vectors on the right are normalized the modulus of their scalar product is bounded by 1.
Therefore $2|\rho_{ge}|(|u_{ee}u_{gg}+e^{i\gamma}u_{eg}u_{ge}|-1) \le 0$ and takes its maximum
for $\rho_{eg}=0$.\\
2) $u_{ee}u_{ge}+\rho_{gg}(u_{eg}u_{gg}-u_{ee}u_{ge})<0$ proceeds along the same lines. \\
The coherence power of a 2-dimensional unitary is therefore achieved for states $\rho$ that
are incoherent. To complete the proof of the theorem we now note that by the convexity $C(\sum_n{p_n\rho_n})\le\sum_n{p_nC(\rho_n)}$ for any set of states $\{\rho_n\}$ and
probability distribution $\{p_n\}$ we find
\begin{eqnarray*}
    C_{l_1}(U\rho_{inc}U^\dagger) &=& C_{l_1}(U\sum_i{p_i\ket{i}\bra{i}}U^\dagger)\\
    &=& C_{l_1}(\sum_i{p_iU\ket{i}\bra{i}U^\dagger})\\
    &\le& \sum_i{p_iC_{l_1}(U\ket{i}\bra{i}U^\dagger})\\
    &\le& C_{l_1}(U\ket{i^*}\bra{i^*}U^\dagger)
\end{eqnarray*}
where $\ket{i^*}\bra{i^*}$ is the pure incoherent state which has the largest contribution
in the sum \cite{Footnote1}. This concludes the proof.
\end{proof}
From theorem 1 we easily find
\begin{corollary} The coherence power of a 2-dimensional unitary operation U, calculated
with respect to the $l_1$-coherence, is given by 
\begin{equation}
    P_{l_1}(U) = \max_j\{(\sum_{i=1}^{2}|U_{ij}|)^2:j=1,2\}-1
\end{equation}
\end{corollary}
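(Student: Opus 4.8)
The plan is to reduce everything to Theorem~1 and then carry out a short direct computation. By Theorem~1 we already know that the maximum in Definition~\ref{def1} is attained on one of the pure incoherent basis states $\ket{j}\bra{j}$, so it suffices to evaluate $C_{l_1}(U\ket{j}\bra{j}U^\dagger)$ for $j=1,2$ and take the larger of the two values.

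First I would observe that $U\ket{j}\bra{j}U^\dagger=\ket{\psi_j}\bra{\psi_j}$, where $\ket{\psi_j}=U\ket{j}$ is simply the $j$-th column of $U$, whose components are the entries $U_{ij}$. Hence the matrix elements of the output state are $(U\ket{j}\bra{j}U^\dagger)_{ab}=U_{aj}U_{bj}^*$, and its $l_1$-coherence is
\begin{equation*}
    C_{l_1}(U\ket{j}\bra{j}U^\dagger)=\sum_{a\ne b}|U_{aj}U_{bj}^*|=\sum_{a\ne b}|U_{aj}||U_{bj}|.
\end{equation*}

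Next I would complete the restricted double sum to a full product and subtract the diagonal contribution,
\begin{equation*}
    \sum_{a\ne b}|U_{aj}||U_{bj}|=\left(\sum_{a}|U_{aj}|\right)^2-\sum_{a}|U_{aj}|^2,
\end{equation*}
and then invoke unitarity: the columns of $U$ are normalized, so $\sum_{a}|U_{aj}|^2=1$. This yields $C_{l_1}(U\ket{j}\bra{j}U^\dagger)=\left(\sum_{i=1}^{2}|U_{ij}|\right)^2-1$, upon relabelling the row index $a$ as $i$. Taking the maximum over $j\in\{1,2\}$ and applying Theorem~1 then gives exactly the claimed formula.

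Since the whole argument is a direct substitution into the formula of Theorem~1, I do not expect any genuine obstacle; the only point requiring care is the bookkeeping of row versus column indices (the inner sum runs over the row index $i$ of a \emph{fixed} column $j$) together with the clean use of column normalization to eliminate $\sum_{a}|U_{aj}|^2$. I would also remark that, although the statement is restricted to $N=2$ because it is there that Theorem~1 guarantees optimality on incoherent states, the intermediate identity $C_{l_1}(U\ket{j}\bra{j}U^\dagger)=\left(\sum_{i}|U_{ij}|\right)^2-1$ holds in any dimension, which is precisely the computation invoked in the Fourier-transform corollary above.
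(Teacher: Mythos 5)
Your proof is correct and follows the same route as the paper: invoke Theorem~1 to restrict to pure incoherent states, then compute $C_{l_1}(U\ket{j}\bra{j}U^\dagger)$ directly. The paper states this computation in a single line, whereas you supply the (correct) intermediate steps — completing the off-diagonal sum to $\left(\sum_a|U_{aj}|\right)^2$ and using column normalization — so there is no substantive difference.
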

\begin{proof}
Since in order to compute the coherence power of a 2-dimensional unitary we need to
maximize the gain over pure incoherent states only, we find
\begin{eqnarray*}
    P_{l_1}(U) &=& \max_{\ket{k}\bra{k}}[C_{l_1}(U\ket{k}\bra{k} U^\dagger)]:k=1,2\\
    &=& \max_j\{(\sum_{i=1}^{2}|U_{ij}|)^2:j=1,2\}-1
\end{eqnarray*}
\end{proof}

\section{Coherence power of an N-dimensional unitary operator ($N>2$)}\label{Section4}

Naively it might be expected that the coherence power of any quantum channel is
achieved on incoherent states. Indeed, the coherence power has been defined in
this way in \cite{ManiK2015}. However,
it is not self-evident that the largest coherence gain is obtained from incoherent
states. Indeed, in the theory of entanglement the analogous question, i.e. whether
the entanglement gain is maximized by starting on separable states, has been answered
in the negative \cite{BennettHL+2003,LeiferHL2003}. In the following we show that the
same observation holds for the case of coherence power.
\begin{proposition}
    For $N>2$, the coherence power of an N-dimensional unitary operator requires
    optimization over coherent states.
\end{proposition}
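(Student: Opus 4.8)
The plan is to prove this by explicit counterexample: exhibit, for each $N>2$, a specific unitary $U$ and a specific coherent state $\rho$ such that the coherence gain $C_{l_1}(U\rho U^\dagger)-C_{l_1}(\rho)$ strictly exceeds the maximum gain achievable over all pure incoherent states $\ket{k}\bra{k}$. Since the coherence power $P_{l_1}(U)$ is by Definition \ref{def1} the supremum over all states, establishing one such $\rho$ demonstrates that the optimization cannot be restricted to incoherent inputs, which is exactly the claim. Concretely, I would first compute the incoherent benchmark $P_{l_1}^{\mathrm{inc}}(U)=\max_k[(\sum_i|U_{ik}|)^2-1]$, generalizing the qubit formula of Corollary 2, and then produce a coherent $\rho$ beating it.

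My strategy for picking the example is guided by the intuition from entanglement theory cited in the text: coherence already present in the input can be ``rotated'' by $U$ so that it constructively reinforces the coherence that $U$ generates, whereas an incoherent input forces $U$ to build up all off-diagonal weight from scratch. First I would look at the smallest case $N=3$, choosing $U$ to be a simple but strongly coherence-generating unitary (for instance a Fourier-type or a carefully chosen real orthogonal matrix whose columns already saturate the incoherent bound), and then search among pure states $\ket{\psi}=\sum_j c_j\ket{j}$ with a single tunable relative phase or amplitude. The computation reduces to comparing two expressions in the matrix entries of $U$: the incoherent gain, and the gain for the trial coherent state. Because $C_{l_1}$ of a pure state $\ket{\psi}\bra{\psi}$ is $(\sum_j|c_j|)^2-1$, and $U\ket{\psi}$ has amplitudes $\sum_j U_{ij}c_j$, both sides are elementary functions of the phases, so the inequality can be checked by direct evaluation or a one-parameter optimization.

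I would then verify that the example extends to all $N>2$, either by embedding the $N=3$ construction into a block of an $N$-dimensional unitary (padding with an identity acting on the remaining coordinates, so the extra dimensions contribute no gain and the strict inequality is preserved) or by giving a family parametrized by $N$. The key conceptual point to make explicit is that the strict inequality is robust, not a boundary coincidence: the optimal coherent input sits in the interior of the state space, so even a small neighborhood of it beats every incoherent state.

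The main obstacle I anticipate is not conceptual but a matter of controlling the optimization over incoherent states tightly enough to guarantee a \emph{strict} gap. It is easy to find a coherent state whose output coherence is large, but one must be sure the chosen $\rho$ genuinely outperforms the \emph{best} incoherent input $\ket{k^*}\bra{k^*}$, not merely a typical one; this requires either evaluating the incoherent benchmark in closed form or bounding it cleanly. A secondary subtlety is that, unlike the qubit case where Theorem 1 let the phase of $\rho_{ge}$ be freely chosen to maximize a single off-diagonal term, for $N>2$ the off-diagonal entries of $U\rho U^\dagger$ are coupled through the shared amplitudes, so the phase degrees of freedom of $\rho$ cannot all be aligned simultaneously. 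Rather than fight this coupling in general, I would exploit it: it is precisely this interference structure that makes a coherent input advantageous, and a well-chosen low-dimensional example sidesteps the need for a full multivariate optimization.
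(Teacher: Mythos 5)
Your proposal follows essentially the same route as the paper: compute the incoherent benchmark $\max_k[(\sum_i|U_{ik}|)^2-1]$ for a concrete $N=3$ unitary (the paper uses $R_x(\pi/4)$) and exhibit a pure coherent state supported on a two-dimensional subspace whose gain strictly exceeds it (the paper's $\ket{\psi}=0.3\ket{1}+\sqrt{1-0.09}\,\ket{3}$ gives $1.1471>1$). Your remark about padding with the identity to cover all $N>2$ is a point the paper leaves implicit, but otherwise the strategy is the same.
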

\textbf{Proof:} We consider the coherence power as quantified relative to the
$l_1$-coherence and the relative entropy of coherence \cite{BaumgratzCP14}.

{\it $l_1$-coherence power --} Let us consider a 3-dimensional rotation
by $\theta=\frac{\pi}{4}$ around the \textit{x} axis:
\begin{equation}
    R_x\left(\frac{\pi}{4}\right) =
    \begin{pmatrix}
        1 &         0         &           0 \\
        0 & \frac{1}{\sqrt 2} & -\frac{1}{\sqrt 2} \\
        0 & \frac{1}{\sqrt 2} & \frac{1}{\sqrt 2}
   \end{pmatrix}
\end{equation}
According to Corollary 1, its maximum coherence gain calculated over pure incoherent
states is found to be:
\begin{equation}
    \max_j\{(\sum_{i=1}^{3}|R_x\left(\frac{\pi}{4}\right)_{ij}|)^2:j=1,2,3\}-1 = 1.
\end{equation}
It is easy to find examples of coherent states that provide a larger coherence
gain for this particular rotation. The state $\ket{\psi}=c_1\ket{1}+c_3\ket{3}$
where $c_1=0.3$ and $c_3=\sqrt{1-0.3^2}$, for instance, provides a coherence gain
of $1.1471$:
\begin{eqnarray*}
    G_{\ket{\psi}\bra{\psi}}\left(R_x\left(\frac{\pi}{4}\right)\right) &=& \\
    && \hspace*{-2.cm} =C_{l_1}\begin{pmatrix}{c_1}^{2} & −\frac{c_1\,c_3}{\sqrt{2}} & \frac{c_1\,c_3}{\sqrt{2}}\\ −\frac{c_1\,c_3}{\sqrt{2}} & \frac{{c_3}^{2}}{2} & −\frac{{c_3}^{2}}{2}\\ \frac{c_1\,c_3}{\sqrt{2}} & −\frac{{c_3}^{2}}{2} & \frac{{c_3}^{2}}{2}\end{pmatrix}\ - C_{l_1}\begin{pmatrix}{c_1}^{2} & 0 & c_1\,c_3 \\ 0 & 0 & 0 \\ c_1\,c_3 & 0 & {c_3}^{2}\end{pmatrix}\\
    && \hspace*{-2.cm}  =(2\sqrt 2-2)c_1c_3+c_3^2\\
    && \hspace*{-2.cm} =1.1471>1.
\end{eqnarray*}

{\it Relative entropy of coherence power --} Assuming that the coherence power could
be calculated by maximization of the gain over incoherent states, and the observation
that by convexity of the relative entropy of coherence we can then restrict maximization
to pure incoherent states, we find for the coherence power of an N-dimensional unitary
with respect to the relative entropy of coherence:
\begin{equation}
    P_{rel.ent.}(U) = \max_i\{-\sum_{j=1}^N|U_{ij}|^2\log(|U_{ij}|^2):i=1,...,N\}
\end{equation}
\textbf{Proof:}
\begin{eqnarray*}
    P_{rel.ent.}(U) &=& \\
    && \hspace*{-2.cm} = \max_{\ket{i}\bra{i}}[ C_{rel.ent}(U\ket{i}\bra{i} U^\dagger)-C_{rel.ent.}(\ket{i}\bra{i}):i=1,...,N]\\
    && \hspace*{-2.cm} = \max_{\ket{i}\bra{i}}[ S((U\ket{i}\bra{i}U^\dagger)_{\diag})
    - S(U\ket{i}\bra{i} U^\dagger):i=1,...,N]\\
    && \hspace*{-2.cm} = \max_{\ket{i}\bra{i}}[S((U\ket{i}\bra{i} U^\dagger)_{\diag}):i=1,...,N]\\
    && \hspace*{-2.cm} = \max_i[-\sum_{j=1}^N|U_{ij}|^2\log(|U_{ij}|^2):i=1,...,N].
\end{eqnarray*}
Let us now consider a 3-dimensional rotation of $\theta=\frac{\pi}{8}$ around the $x$ axis:
\begin{equation}
    R_x\left(\frac{\pi}{8}\right) =
    \begin{pmatrix}
        1 &              0                 &                 0 \\
        0 & \cos\left(\frac{\pi}{8}\right) & -\sin\left(\frac{\pi}{8}\right)\\
        0 & \sin\left(\frac{\pi}{8}\right) & \cos\left(\frac{\pi}{8}\right).
   \end{pmatrix}
\end{equation}
Maximization of the coherence gain of this rotation over incoherent states results
in
\begin{eqnarray*}
    \max_i\{-\sum_{j=1}^3|R_x\left(\frac{\pi}{8}\right)_{ij}|^2
    \log(|R_x\left(\frac{\pi}{8}\right)_{ij}|^2):i=1,2,3\}&& \\
    && \hspace*{-1.cm}=0.41650.
\end{eqnarray*}
However we have found a number of coherent states that provide an even larger gain, such as the state $\ket{\phi}=q_2\ket{2}+q_3\ket{3}$ where $q_2=\sqrt{1-0.12533^2}$ and $q_3=0.12533$:
\begin{equation}
G_{\ket{\phi}\bra{\phi}}\left(R_x\left(\frac{\pi}{8}\right)\right)=0.47648>0.41650.
\end{equation}

The maximum gain of these two rotations, with respect to their corresponding coherence
measure, is not achieved on pure incoherent states.
Therefore the most natural definition of the coherence coherence power is by maximization
over {\it all} states.

\section{Conclusion}
In this note we have discussed several possible definitions of coherence power. We have 
also proved that the coherence power of a 2-dimensional unitary operator can be calculated 
by maximizing its coherence gain over pure incoherent states only. Giving two explicit 
counterexamples, we could show that this result cannot be generalized for dimensions higher 
than $N=2$ \cite{Note_added}.\\

Hence, analogously to the result of entanglement theory, where it was observed that entangled 
states typically admit the largest gain in entanglement, we found that some initial coherence 
in the input state can be required for an optimal coherence gain to be attained. This result 
shows that it is not sufficient to maximize the coherence gain over incoherent states. It 
seems therefore an interesting question if one can restrict the optimization in higher dimension 
to a smaller subset or one needs to run it over the whole state space even for unitary evolutions. 
For non-unitary evolutions, while it seems challenging to try to find a generic simplification, 
one still might use the symmetries present in coherence theory to simplify the optimization for 
a given evolution, similarly as we used them here in the case of qubits and unitary evolution 
for proving theorem 1.

\begin{acknowledgements}
We acknowledge discussions with S.F. Huelga, N. Killoran, A. Smirne, M. Matera and K. Macieszczak. This work was
supported by an Alexander von Humboldt Professorship, the EU Integrating Project SIQS
as well as the EU STREPs EQUAM and QUCHIP.
\end{acknowledgements}

\end{document}